\documentclass[12pt]{amsart}
\voffset=-0.5in
\headheight=6pt
\textheight=670pt
\textwidth=450pt
\topmargin=0pt
\oddsidemargin=14pt
\evensidemargin=14pt
\usepackage{amsthm,amsmath,amssymb,amsfonts,latexsym,verbatim,stmaryrd,url,xcolor}

\newcommand{\abs}[1]{|#1|} 
\newcommand{\F}{\mathbb{F}}
\newcommand{\Z}{\mathbb{Z}}
\DeclareMathOperator{\gal}{Gal}
\DeclareMathOperator{\Cay}{{\mathrm{Cay}}}

\newtheorem{theorem}{Theorem}[section]

\newtheorem{lemma}[theorem]{Lemma}

\theoremstyle{remark}

\newtheorem*{concremark*}{Concluding Remarks}

\theoremstyle{definition}

\begin{document}
\title[]{
  Uniform mixing in continuous-time quantum walks 
  on oriented, nonabelian Cayley graphs
}
\author[P. Sin]{Peter Sin}
\address{Department of Mathematics\\University of Florida\\ P. O. Box 118105\\ Gainesville FL 32611\\ USA}

\thanks{This work was partially supported by a grant from the Simons Foundation (\#633214
 to Peter Sin).}
\date{}
\begin{abstract} A family  of oriented, normal, nonabelian
  Cayley graphs is presented, whose continuous-time quantum
  walks exhibit uniform mixing.
\end{abstract}
\maketitle

\section{Introduction}
Given a simple, undirected graph $X$ we can consider a quantum walk
on $X$ whose evolution is given by the matrices $U(t)=e^{-itA}$, where
$A$ is the adjacency matrix.  If instead of an undirected graph we have an oriented graph, we can
consider the quantum walk whose transition matrices are $U(t)=e^{tS}$,
where $S$ is the skew adjacency matrix,  whose $(u,v)$ entry
is $1$ if $(u,v)\in X\times X$ is an arc, $-1$ if $(v,u)$ is an arc, and $0$ otherwise.
In both cases, the walk is said to have {\it uniform mixing} (abbreviated to UM) at time $\tau$ if all the entries of $U(\tau)$ have the same absolute value.

Here is a brief survey of  examples of UM in undirected and oriented graphs.
In the undirected case, the complete graphs $K_q$, for $q=2,3,4$ exhibit UM at appropriate times, but not for $q>4$. It is also easily checked  that
Cartesian products of  graphs that have UM at the same time will also  have UM at that time. 
Most known examples are regular graphs, but  Godsil and Zhan \cite{Godsil-Zhan} found that
the irregular graph $K_{1,3}$ has UM. The paper of Moore and Russell \cite{ Moore-Russell}, where the concept of UM first appeared, showed that hypercubes have UM. Various other families of  cubelike graphs with UM are given by Chan in \cite{ChanUMCubes}, and a construction using bent
functions is given by Cao, Feng and Wan in  \cite{CaoWanFeng}. In addition
Godsil, Mullin and Roy \cite{Godsil-Mullin-Roy} determined the parameter sets of
the strongly regular graphs that have UM.
In the oriented case there  fewer examples are known. The simplest is the $4$-cycle. As in the undirected case the class of oriented graphs with UM at a given time is closed under taking Cartesian products.
Some Cayley graphs on $(\Z/4\Z)^n$ that have UM but  are not Cartesian products
were found by Godsil and Zhang in \cite{Godsil-Zhang}. In the same paper it is proved that
if an oriented Cayley graph on a finite group $G$ has UM, then $\abs{G}$ must be an even perfect square. This result comes from a connection with
Hadamard matrices that will come up in our discussion below.

Let $G$ be a finite group  and $C$ a subset. We denote by
$\Cay(G,C)$ the Cayley digraph on $G$ with connection set $C$, where
we have an arc from $g$ to $cg$ for every $c\in C$. We assume that $1\notin C$
so that $\Cay(G,C)$ has no loops. Let $C^{(-1)}$ denote the set of
inverses of elements in $C$. Then $\Cay(G,C)$ is undirected iff $C^{(-1)}=C$, and oriented iff $C^{(-1)}\cap C=\emptyset$. In both cases, $\Cay(G,C)$ is
{\it normal} if $C$ is a union of conjugacy classes.

The purpose of this note is to describe an infinite family of
oriented, normal Cayley graphs on nonabelian $2$-groups whose continuous time
quantum walks exhibit UM. We begin with a general
theory of UM on oriented Cayley graphs in \S\ref{orientedUM}. In the \S\ref{groups_graphs} we describe
the groups and graphs and then in \S\ref{proofs} we give the proofs, based
on the character theory of these groups described in \S\ref{chars}.

\section{Uniform mixing in oriented Cayley graphs}\label{orientedUM}
Let $G$ be a finite group and let $K_j$, $j=1$,\dots, $k$ be the conjugacy classes. Let $C$ be a union of conjugacy classes such that  $C^{(-1)}\cap C=\empty$. 
We shall derive conditions for uniform mixing on the  oriented, normal Cayley graph $\Cay(G,C)$, via the spectral decomposition, just like in
in the undirected case (\cite{ChanUMCubes} , \cite{Godsil-Mullin-Roy}).

We adopt the notational convention of denoting  a sum in the group algebra of a subset $Y\subseteq G$ by the same symbol $Y$. For a character $\chi$ of $G$
we will also write $\chi(Y)$ for $\sum_{y\in Y}\chi(y)$.

Let  $S$ be the skew adjacency matrix. We can view $S$
as the regular representation of the group algebra element $C-C^{(-1)}$.
Now $iS$ is hermitian so
$e^{-it(iS)}=e^{tS}$ is a real orthogonal matrix. It follows that $UM$
occurs at time $\tau$ if and only if there exist $t_j\in\{1,-1\}$, $j=1$,\dots,
$k$, such that
\begin{equation}\label{had_eq}
\sqrt{G}e^{\tau S}=\sum_{j=1}^k t_jK_j.
\end{equation}
This is because if \eqref{had_eq} holds then the matrix on the right with
$\pm 1$ entries must be a real Hadamard matrix while,  conversely, every matrix of the form $\sqrt{G}e^{tS}$ lies in the span of the $K_j$, and its entries are all $\pm 1$ only if it has the form of the right side of \eqref{had_eq}.

The eigenvalues of $\Cay(G,C)$ are given by the irreducible complex
characters $\{\chi_r\}_{r=1}^k$ of $G$. The character $\chi_r$
contributes the eigenvalue
\begin{equation}
  \theta_r=\frac{\chi_r(C)-\chi_r(C^{(-1)})}{\chi_r(1)}=\frac{\chi_r(C)-\overline{\chi_r(C})}{\chi_r(1)}
\end{equation}
with multiplicity $\chi_r(1)$. Different characters may give the same
eigenvalue, whose total multiplicity will then be the sum of the corresponding character degrees.

Using the  spectral decomposition, it follows that \eqref{had_eq} holds if and only if for all $r=1$,\dots,$k$, we have
\begin{equation}\label{char_eq}
\sqrt{G}e^{\tau \theta_r}=\sum_{j=1}^k t_j\frac{\chi_r(K_j)}{\chi_r(1)},
\end{equation}
and it is \eqref{char_eq} that we shall use from now on.

\section{Cayley graphs on Suzuki $2$-groups}\label{groups_graphs}
We now describe a class of nonabelian (normal)
oriented Cayley graphs that have UM.

Let $n=2m+1$ be an odd positive integer. Let $\theta$ 
be any generator of the cyclic group $\gal(\F_{2^n}/\F_2)$.
Let $A(n,\theta)$ denote the group of matrices 
\begin{equation}\label{matrix}
  \begin{bmatrix}1&a&b\\0&1&a^\theta\\0&0&1
  \end{bmatrix}, \quad \text{$a\in \F_{2^n}$.}
\end{equation}
These are called Suzuki $2$-groups of type $A$, first studied by G. Higman
\cite{Higman}
Note that for $n=1$, $A(1,\theta)\cong\Z_4$.
For $n>1$, let $G=A(n,\theta)$. For simplicty of notation, we  denote the above element \eqref{matrix} of $G$ by the ordered $(a,b)$. Then the following
formulas can be checked directly.
  \begin{enumerate} \label{gp_prods}
  \item[(i)] $(a,b)(c,d)=(a+c,b+d+ac^\theta)$;
  \item[(ii)] $(a,b)^{-1}=(a,b+aa^\theta)$;
  \item[(iii)] $(a,b)^{-1}(c,d)(a,b)=(c,d+ac^\theta+a^\theta c)$;
\item[(iv)]$[(c,d),(a,b)]:= (c,d)^{-1}(a,b)^{-1}(c,d)(a,b)=(0,ac^\theta+a^\theta c)$.
  \end{enumerate}

  It is easy to verify that $Z(G)=\{(0,b)\mid b\in \F_{2^n}$ and $Z(G)$ contains all of the involutions in $G$, and elements outside $Z(G)$ have order $4$,
  so $G/Z(G)$ and $Z(G)$ can both be viewed as $n$-dimensional vector spaces
  over $\F_2$.
  
  An important property of $G$ is the existence of a cylic group of
  automorphisms that acts regularly on the set of involutions.
  Using formula (i), we see that if $\lambda$ is a generator of $\F_{2^n}^\times$,
  then the map $\xi:G\to G$ given by
  \begin{equation}\label{xi}
    \xi(a,b)=(\lambda a,\lambda^{1+\theta}b.
  \end{equation}
  is an automorphism of order $2^n-1$. The group $\langle xi\rangle$ acts
  regularly on the set of nonzero elements of $Z(G)$, and also
  on the nonzero elements of $G/Z(G)$ in its induced action.

We summarize some further facts about
$G$ in the following lemma. A group $E$ of prime power order is {\it extraspecial} if its center has prime order and its quotient by the center is elementary
abelian (and nontrivial).
\begin{lemma}\label{gp_properties}
  \begin{enumerate}
     \item[(a)] $Z(G)$ is equal to the commutator subgroup $[G,G]$ of $G$.
  \item[(b)] If $x\in G$ has order $4$, its centralizer is
    $C_G(x)=\langle x,\Z(G)\rangle$.
  \item[(c)] For each noncentral element $x$ the subgroup
    $[x,G]=\langle x^{-1}g^{-1}xg\mid g\in G\rangle$ is a maximal
    subgroup of $Z(G)$, and every maximal subgroup of $Z(G)$ has this form
    with $[x,G]=[x',G]$ iff $xZ(G)=x'Z(G)$.
  \item[(d)] The conjugacy class of a noncentral element $x$ is equal
    to the coset $x[x,G]$ and has size $2^{n-1}$.
  \item[(e)] No element of order $4$ is conjugate to its inverse. There are $2(2^n-1)$ conjugacy classes of elements of order $4$.
\item[(f)] Let $M_x:=[x,G]$, for $x\in G$ of order $4$.  Then $G/M_x$ is isomorphic to a central product $\langle x\rangle\Ydown E(x)$ of $\langle x\rangle\cong C_4$ with an extraspecial group $E(x)$ of order $2^n=2^{2m+1}$, amalgamating the subgroups  $\langle x^2\rangle$ and  $Z(E(x))$ or order $2$.
  \end{enumerate}
\end{lemma}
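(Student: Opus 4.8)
The plan is to reduce everything to the $\F_2$-bilinear map $\beta\colon\F_{2^n}\times\F_{2^n}\to\F_{2^n}$, $\beta(a,c):=ac^\theta+a^\theta c$, which controls conjugation via (iii) and commutators via (iv). Since $\theta$ generates $\gal(\F_{2^n}/\F_2)$, so do $\theta^{-1}$ and, because $n$ is odd, $\theta^2$, so the fixed field of each is $\F_2$; from this I would record three facts. (1) $\beta$ is alternating. (2) For $c\ne0$, $\beta(a,c)=0$ gives $ac^\theta=a^\theta c$, i.e.\ $a/c=(a/c)^\theta\in\F_2$, so $\ker\beta(\,\cdot\,,c)=\F_2 c$ has dimension $1$. (3) Substituting $a=tc$ yields $\beta(tc,c)=(t+t^\theta)c^{1+\theta}$, so the image of $\beta(\,\cdot\,,c)$ is the single hyperplane $c^{1+\theta}T$, where $T:=\{t+t^\theta\mid t\in\F_{2^n}\}$; computing the adjoint of $1+\theta$ for the trace form identifies $T=\ker(\mathrm{Tr}_{\F_{2^n}/\F_2})$, and in particular $1\notin T$ since $\mathrm{Tr}(1)=n\equiv1\pmod 2$.

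Parts (a)--(d) are then short. For (a): $[G,G]\le Z(G)$ since $G/Z(G)$ is a vector space, $[G,G]\ne1$ since $G$ is nonabelian, and $[G,G]$ is characteristic hence $\langle\xi\rangle$-invariant; as $\langle\xi\rangle$ is transitive on $Z(G)\setminus\{1\}$, the only such subgroups are $1$ and $Z(G)$. For (b): with $x=(c,d)$, $c\ne0$, formula (iii) gives $C_G(x)=\{(a,b)\mid\beta(a,c)=0\}=Z(G)\cup xZ(G)$ by fact (2), and since $x^2=(0,c^{1+\theta})\in Z(G)$ this set is exactly $\langle x,Z(G)\rangle$. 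For (c): formula (iv) gives $[x,G]=\{(0,m)\mid m\in c^{1+\theta}T\}$, already a subgroup and, by fact (3), a maximal subgroup of $Z(G)$ depending only on $c$ (equivalently on $xZ(G)$); the map $c\mapsto c^{1+\theta}T$ is injective, because $c^{1+\theta}T=(c')^{1+\theta}T$ forces $(c/c')^{1+\theta}$ to fix the hyperplane $T=\ker\mathrm{Tr}$ under multiplication, hence to equal $1$, and $(c/c')^{1+\theta}=1$ forces $c/c'\in\F_2$ (here $n$ odd), i.e.\ $c=c'$; a count of $2^n-1$ nonzero $c$'s against $2^n-1$ maximal subgroups makes it a bijection. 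Part (d) is immediate from (iii): the conjugacy class of noncentral $x$ is $x\cdot[x,G]$, of size $\abs{c^{1+\theta}T}=2^{n-1}$.

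For (e): (ii) gives $x^{-1}=(c,d+c^{1+\theta})$, while by (d) the conjugates of $x$ are the elements $x\cdot(0,m)$ with $m\in c^{1+\theta}T$; thus $x^{-1}\sim x$ iff $c^{1+\theta}\in c^{1+\theta}T$ iff $1\in T$, which is false. Dividing the number $2^n(2^n-1)$ of noncentral elements by the common class size $2^{n-1}$ gives $2(2^n-1)$ classes. I would then turn to (f), the longest part. Set $M:=M_x=[x,G]$ and $\bar G:=G/M$, of order $2^{n+1}$, and record: $\bar G'=Z(G)M/M=Z(G)/M\cong C_2$ by (a); the image $\bar x$ is central in $\bar G$ (as $[x,G]=M$) and has order $4$ (as $x^2\notin M$, by the computation in (e)), so $\langle\bar x\rangle\cong C_4$ and $\bar G'=\langle\bar x^2\rangle$; and $Z(\bar G)=\langle\bar x\rangle$, since by (c) the preimage $\{g\in G\mid[g,G]\le M\}$ equals $\langle x,Z(G)\rangle$.

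To finish (f), choose any hyperplane of the elementary abelian group $\bar G/\bar G'\cong G/Z(G)\cong\F_2^{\,n}$ avoiding the nonzero vector $\bar x\bar G'$, and let $E$ be its preimage: then $\bar G'\le E$, $\abs E=2^n$, $\bar x\notin E$, whence $E\langle\bar x\rangle=\bar G$ and $E\cap\langle\bar x\rangle=\langle\bar x^2\rangle$. Now $E$ is extraspecial of order $2^n$: any $z\in Z(E)$ centralizes $E\langle\bar x\rangle=\bar G$, so $z\in Z(\bar G)\cap E=\langle\bar x^2\rangle$, giving $Z(E)=\langle\bar x^2\rangle$ of order $2$; and for $g\in E$ one has $g^2\in Z(\bar G)=\langle\bar x\rangle$ (since $[g^2,h]=[g,h]^2=1$), while $g^2\notin\{\bar x,\bar x^3\}$ because $\bar x\notin E$, so $g^2\in\langle\bar x^2\rangle$ and $E/Z(E)$ is elementary abelian of order $2^{n-1}\ge4$. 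Since $\bar x$ centralizes $E$, we get $\bar G=\langle\bar x\rangle\Ydown E$ amalgamating $\langle\bar x^2\rangle=Z(E)$ with $\langle\bar x^2\rangle\le\langle\bar x\rangle$ (the extraspecial type of $E$ is not canonical, but the statement only asserts existence of such a decomposition). The main obstacle is not any single step but the recurring arithmetic of $\F_{2^n}$ that works only because $n$ is odd---triviality of the multiplicative stabilizer of $T$, $\mathrm{Tr}(1)=1$, and injectivity of $c\mapsto c^{1+\theta}$; once these are in place the rest is bookkeeping with $\beta$.
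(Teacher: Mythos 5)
Your proof is correct, and its overall architecture matches the paper's: everything is read off from formulas (i)--(iv), part (f) is obtained by exhibiting an index-$2$ subgroup $E$ of $G/M_x$ not containing $\bar x$ and checking $Z(E)=\langle \bar x^2\rangle$ directly. The genuine differences are in (c) and (e). For (c) the paper gets surjectivity onto the set of hyperplanes of $Z(G)$ from transitivity of the automorphism group $\langle\xi\rangle$ and then deduces injectivity by counting; you go the other way, identifying $[x,G]$ explicitly as $\{(0,m): m\in c^{1+\theta}\ker\mathrm{Tr}\}$, proving injectivity of $c\mapsto c^{1+\theta}\ker\mathrm{Tr}$ via triviality of the multiplicative stabilizer of the hyperplane $\ker\mathrm{Tr}$, and then counting for surjectivity. (Do spell out that stabilizer argument: the stabilizer together with $0$ is a subfield $\F_{2^d}$ over which $\ker\mathrm{Tr}$ is a vector space, so $d\mid n$ and $d\mid n-1$, forcing $d=1$; note this step does not actually use $n$ odd.) For (e) the paper argues that $x\sim x^{-1}$ would contradict injectivity of $u\mapsto u^{1+\theta}$, whereas your criterion reduces to $1\notin\ker\mathrm{Tr}$, i.e.\ $\mathrm{Tr}(1)=n\equiv 1\pmod 2$ --- a cleaner formulation that also makes it transparent that $x^2\notin M_x$ is literally the same condition. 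Both routes rest on the same underlying arithmetic of $\F_{2^n}$ with $n$ odd; yours makes the role of the trace form explicit, the paper's leans more on the automorphism $\xi$.
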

\begin{proof} Since $G/Z(G)$ is abelian we have $[G,G]\le Z(G)$. As $G$ is
  $[G,G]$ is a characteristic subgroup and contains a nonzero element of
  $Z(G)$ by formula (iv). Thefore, by transitivity of $\langle\xi\rangle\>$ on the nonzero elements of $Z(G)$ we have $Z(G)\leq [G,G]$.
  Part (b) follows from formula (iii).
  For  $x\in G\setminus Z(G)$ define $f_x(gZ(G))= [x,g]$.
  By (a), $f_x:G/Z(G)\to Z(G)$ is a well-defined linear map of
  $\F_2$-vector spaces. The image of $f_x$ is equal to $[x,G]$. By (c), the kernel of $f_x$ is  $\langle xZ(G)\rangle$ which is a one-dimensional.
  Hence $[x,G]$ is a hyperplane of $\F_2$-vector space $Z(G)$.
  To see that every maximal subgroup of $Z(G)$ has the form $[x,G]$,
  we note first that for any automorphism $\eta$ of $G$, we have
  $\eta([x,G])=[\eta(x),G]$. Since the group $\langle\xi\rangle$
  acts regularly on the nonzero elements of $Z(G)$, it also acts regularly
  on the set of hyperplanes. This every hyperplane has the form
  $\xi^e([x,G])=[\xi^e(x),G]$, for some $e$. It is clear that $[x,G]=[x',G]$
  when $xZ(G)=x'Z(G)$. Therefore we have a map from
  the from $(G/Z(G))\setminus\{0\}$ to the set of hyperplanes of $Z(G)$
  given by $xZ(G)\mapsto [x,G]$. We have proved that this map is surjective.
  As the domain and codomain have the same size, the map is bijective,
  and (c) is prove.  Parts (d) is immediate from (c).

  To prove (e) we observe, as in \cite[p.82]{Higman},
  that since $\theta$ has odd order $r$, the map $a\mapsto a^{1+\theta}$
  is invertible (because $2=1+\theta^r$ can be factored $(1+\theta)\psi$,
  and the squaring map $2$ is invertible). We use this to
  show that if $a\neq0$, then  $x=(a,b)$ is not conjugate to its inverse.
  For if this were so, we see from formulas (ii) and (iii)  that
  there would be some $c\in\F_{2^n}$ such that
  $ac^\theta+a^\theta c=aa^\theta$, 
  which can be rewritten as $(a+c)^{1+\theta}=c^{1+\theta}$. This contradicts
  the bijectivity of $(1+\theta)$, and so (e) is proved.
  A similar calculation shows that $x^2\notin [x,G]$, so the image
  of $x$ in $G/M_x$ has order $4$, which we will now use in the proof of (f).
  Let  $\{\overline y_s\}_{s=1}^n$ be a basis of $G/Z(G)$, with $y_1$ the image of $x$. Let $y_i$ be preimages in $G$ and let $\tilde E(x)\leq G/M_x$ be the
  image of subgroup of $G$ generated by $y_2$,\dots,$y_n$ and $Z(G)$,
  and $E(x)$ its image in $G/M_x$. Since $G/Z(G)$ is elementary abelian, 
  so is $E(x)/Z(E(x))$.
  We claim that $E(x)$ is extraspecial. It remains to show
  that $Z(E(x))$ is cyclic of order 2. Let $y\in\tilde E(x)$, with $y\notin
  Z(G)$.
  Then $yZ(G)\neq xZ(G)$, so $M_y\neq M_x$. Thus, there exists
  $w\in G$ such that $f_y(wZ(G))\notin[x,G]$.  Since $f_y(wxZ(G))=f_y(wZ(G))f_y(xZ(G))$ and $f_y(xZ(G))\in M_x$, we may take $w\in \tilde E(x)$.
  This means that the image of  $y$ in $E(x)$ does not lie in the $Z(E(x))$.
  Thus $Z(E(x))=Z(G)/M_x$, which is cyclic of order 2, and we have proved that
  $E(x)$ is extraspecial. It is now easy to check that $G/M_x$ is the central
  product of $E(x)$ with $\langle x\rangle$, with the $Z(E(x))=\langle x^2\rangle$.
\end{proof}

Let $C$ be the union of exactly half of the conjugacy classes of elements
of order $4$, chosen so that no element of $C$ has its inverse in $C$.
By parts (b) and (d) of Lemma~\ref{gp_properties} $\abs{C}=2^{n-1}(2^n-1)$. There are, of course, many ways to choose $C$.
\begin{theorem} With $G=A(n,\theta)$ and $C$ as above, $\Cay(G,C)$
  has uniform mixing at time $\tau=\pi/2^{n+1}$. 
\end{theorem}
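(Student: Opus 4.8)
The plan is to make the choice of signs $t_j$ in \eqref{had_eq} explicit and then verify \eqref{char_eq} one irreducible character at a time; note throughout that $\abs{G}=2^{2n}$, so $\sqrt{\abs{G}}=2^n$. From Lemma~\ref{gp_properties} I use the coarse structure of $G$: it is the disjoint union of its centre $Z(G)$, whose $2^n$ elements are precisely the central conjugacy classes, and the set of elements of order $4$, which splits into $2(2^n-1)$ classes of size $2^{n-1}$; since none of these is inverse-closed (part~(e)), they fall into $2^n-1$ mutually inverse pairs, and $C$ is the union of one class chosen from each pair. The identity I propose to establish, an equality in $Z(\mathbb{C}G)$ written with the group-algebra sum conventions of \S\ref{orientedUM}, is
\[
 \sqrt{\abs{G}}\,e^{\tau S}=Z(G)+C-C^{(-1)}.
\]
Its right-hand side has the shape $\sum_j t_jK_j$ with $t_j=1$ on each central class and on each class inside $C$, and $t_j=-1$ on each class inside $C^{(-1)}$; these three families partition the set of conjugacy classes, so $(t_j)$ is well defined and every entry of the matrix $Z(G)+C-C^{(-1)}$ is $\pm1$. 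Applying each central character $\omega_{\chi_r}$ (equivalently, comparing $\chi_r$-components) turns this identity into the system \eqref{char_eq}, so by \S\ref{orientedUM} it suffices to verify \eqref{char_eq} for every irreducible $\chi=\chi_r$, treating linear and nonlinear characters separately.

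The character-theoretic input, to be supplied in \S\ref{chars}, is: the $2^n$ linear characters are trivial on $[G,G]=Z(G)$, hence factor through the elementary abelian group $G/Z(G)$ and take only the values $\pm1$; and every nonlinear irreducible $\chi$ has degree $2^m$, lies over a unique nontrivial $\lambda\in\Irr(Z(G))$ with $\chi|_{Z(G)}=2^m\lambda$, and vanishes on $G\setminus Z(G)$ except on the unique inverse pair of noncentral classes whose common commutator subgroup (Lemma~\ref{gp_properties}(c)) is $\ker\lambda$. Writing $K^{+}_{\lambda}$ for the member of this pair lying in $C$, $K^{-}_{\lambda}=(K^{+}_{\lambda})^{(-1)}$ for the other, and choosing $x\in K^{+}_{\lambda}$, two short deductions complete the preparation. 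First, expanding $1=\langle\chi,\chi\rangle$ using $\abs{K^{\pm}_{\lambda}}=2^{n-1}$ and $\chi|_{Z(G)}=2^m\lambda$ gives $\abs{\chi(x)}^{2}=2^{n}-2^{2m}=2^{2m}$. Second, $x^{-1}=x\cdot x^{2}$ with $x^{2}$ central and $\lambda(x^{2})=-1$, because $x^{2}\notin[x,G]=\ker\lambda$ (this is the calculation in the proof of Lemma~\ref{gp_properties}(f)); hence $\chi(x^{-1})=\lambda(x^{2})\chi(x)=-\chi(x)$, and comparing with $\chi(x^{-1})=\overline{\chi(x)}$ shows $\chi(x)$ is purely imaginary, so $\chi(x)=\pm 2^{m}i$. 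Since $\chi$ is supported on $C$ only along $K^{+}_{\lambda}$, one gets $\chi(C)=2^{n-1}\chi(x)$ and therefore
\[
 \theta_{\chi}=\frac{\chi(C)-\overline{\chi(C)}}{\chi(1)}=\frac{2^{n}\chi(x)}{2^{m}}=\pm 2^{n}i,
\]
whereas $\theta_{\chi}=0$ for each linear $\chi$, for which $\chi(C)$ is an ordinary integer.

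It then remains to check \eqref{char_eq} at $\tau=\pi/2^{n+1}$. For a linear character the left-hand side equals $2^{n}e^{0}=2^{n}$, while its right-hand side is $\sum_{z\in Z(G)}\chi(z)+\sum_{K}t_{K}\chi(K)$ over the noncentral classes $K$; the first sum is $2^{n}$ because $\chi\equiv1$ on $Z(G)$, and in the second each inverse pair $\{K^{+},K^{-}\}$ contributes $2^{n-1}\bigl(\chi(g)-\overline{\chi(g)}\bigr)=0$ since $\chi$ is real valued, so both sides equal $2^{n}$. For a nonlinear character over $\lambda$ the left-hand side is $2^{n}e^{\tau(\pm 2^{n}i)}=2^{n}e^{\pm i\pi/2}=\pm 2^{n}i$; on the right the central classes contribute $\sum_{z\in Z(G)}\lambda(z)=0$ (as $\lambda\neq1$), and the pair $K^{\pm}_{\lambda}$ contributes
\[
 \frac{\chi(K^{+}_{\lambda})}{2^{m}}-\frac{\chi(K^{-}_{\lambda})}{2^{m}}
 =\frac{2^{n-1}\bigl(\chi(x)-\chi(x^{-1})\bigr)}{2^{m}}
 =\frac{2^{n}\chi(x)}{2^{m}}=\pm 2^{n}i,
\]
which matches; this is precisely where the value $\tau=\pi/2^{n+1}$ is used, being chosen so that $e^{\tau\cdot 2^{n}i}=i$. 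Having verified \eqref{char_eq} for all $\chi_{r}$, the reduction in \S\ref{orientedUM} yields uniform mixing of $\Cay(G,C)$ at $\tau$.

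The only genuine difficulty I foresee lies in \S\ref{chars}: establishing that every nonlinear irreducible of $A(n,\theta)$ has degree $2^{m}$ and is supported off the centre on just one inverse pair of order-$4$ classes, with $\chi|_{Z(G)}=2^{m}\lambda$. One should also note, though it is immediate from Lemma~\ref{gp_properties}(c),(d), that as $\lambda$ runs over the nontrivial characters of $Z(G)$ these pairs run bijectively over all inverse pairs of order-$4$ classes, which is what makes the sign vector $(t_j)$ consistent. Once \S\ref{chars} is in hand, everything above is routine computation.
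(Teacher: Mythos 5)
Your proposal is correct and follows essentially the same route as the paper: the same sign vector $t_j$ (equivalently the identity $\sqrt{\abs{G}}\,e^{\tau S}=Z(G)+C-C^{(-1)}$), the same split into linear versus nonlinear characters, and the same evaluation of both sides of \eqref{char_eq} at $\tau=\pi/2^{n+1}$. The only cosmetic difference is that you recover the values $\chi(x)=\pm 2^{m}i$ on the order-$4$ classes from $\langle\chi,\chi\rangle=1$ together with $\chi(x^{-1})=\lambda(x^{2})\chi(x)=-\chi(x)$, whereas the paper reads them off from the central-product description $G/M_x\cong\langle x\rangle\Ydown E(x)$ in \S\ref{chars}; both still rest on that section for the degree and support of the nonlinear characters.
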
\label{mainthm}
The proof of the theorem consists of verifying the condition given in
\eqref{char_eq} for an appropriate choice of signs $t_j$. To this end,
we need  a complete description of the irreducible characters of $G$,
which is the subject of the next section.

\section{The irreducible characters of $A(n,\theta)$}\label{chars}
Since a group with a faithful irreducible complex representation must
have a cyclic center, it follows that every irreducible complex representation
of $G$ must contain a maximal subgroup of $Z(G)$ in its  kernel.
The maximal subgroups of $Z(G)$ have the form $M_x=[x,G]$ for some $x\in
G\setminus Z(G)$.  Then $G(x):=G/M_x$ the central product
$\langle x\rangle\Ydown E(x)$ of
$\langle x\rangle\cong C_4$ with an extraspecial group $E(x)$ of order
$2^n=2^{2m+1}$, amalgamating the subgroups  $\langle x^2\rangle$ and  $Z(E(x))$ or order $2$. We refer to \cite[Chapter 5.5]{G3} for the  necessary background information on extraspecial groups and their irreducible representations.
From the character theory of extraspecial groups, it follows that
$G(x)$ has exactly $2$ nonlinear irreducible
complex characters $\chi_x^+$ and $\chi_x^-$, whose
restrictions to $E(x)$ are both equal to the unique nonlinear irreducible character of degree $2^m$ of $E(x)$, and such that $\chi_x^+(x)=2^mi$ and $\chi_x^-(x)=-2^mi$. In the usual way, the characters $\chi_x^\pm$ can be viewed as
characters of $G$ having $M_x$ in their kernel.
If we allow $x$ to vary over a set of representative generators of
the $2^n-1$ cyclic subgroups of order $4$ in $G$, we obtain
$2(2^n-1)$ irreducible characters of degree $2^m$ and they are distinct
since the characters $\chi_x^\pm$ take the value $0$ outside $Z(G(x))=\langle x\rangle$.
The characters $\chi_x^{\pm}$, together with the $2^n$ linear characters of $G$, we have all the irreducible characters since the sum  of their squared degrees equals $2^n+2(2^n-1)2^{2m}=2^{2n}=\abs{G}$.

\section{Proof of Theorem~\ref{mainthm}}\label{proofs}
We can now complete the proof of Theorem~\ref{mainthm}.
We set
\begin{equation}\label{coeffs}
  t_j=\begin{cases}1\qquad\text{if $K_j\subseteq C$;}\\
  -1\qquad\text{if $K_j\subseteq C^{(-1)}$;}\\
  1\qquad\text{if $K_j$ is a central.}
  \end{cases}
\end{equation}
With this choice of signs $t_j$ it is now a simple matter  to verify, using the description of the irreducible character values given above,
that the equation \eqref{char_eq} holds for every irreducible character when
$\tau=\frac{\pi}{2^{n+1}}$.

Consider first  a linear character $\lambda$. Then, as  $Z(G)$ equals the
commutator subgroup, it lies in the kernel of $\lambda$, which can therefore
be considered a character of $G/Z(G)$. Consider a class $K_j$ of elements of order 4 and let $K_{j^*}$ denote the inverse
class. All elements in both classes map to a single element of $G/Z(G)$, so $\lambda$ is constant on $K_j\cup K_{j^*}$.
Since $t_j=-t_{j^*}$, we see that the contribution from $K_j\cup K_{j^*}$ to the right side of \eqref{char_eq} is zero.
It follows that for $\lambda$ the  right side of \eqref{char_eq} is $\abs{Z(G)}=2^n$
The same reasoning also shows that the eigenvalue corresponding to $\lambda$ is equal to $0$.
Therefore \eqref{char_eq} is satisfied for all linear characters $\lambda$.

Now consider one of the nonlinear characters. As the cases are similar
we just  consider $\chi_x^-$.
Suppose $z\in Z(G)$. Then $\chi_x^-(z)=2^m$  if $z\in M_x$
and $\chi_x^-(z)=-2^m$ if $z\notin M_x$. Therefore, the central elements contribute zero to the right had side of \eqref{char_eq}.
As for elements $y$ outside $Z(G)$,
which all have order $4$, we have $\chi_x^-(y)=0$ unless $y$ is conjugate
to $x$ or $x^{-1}$, with  $\chi_x^-(x)=-2^mi$ and $\chi_x^-(x^{-1})=2^mi$. 
Since the conjugacy classes of $x$ and $x^{-1}$  have size $2^{n-1}$, we see that the right
side of \eqref{char_eq} for $\chi_x^-$ evaluates to $\frac{2^{n-1}(-2^m-2^m)}{2^m}=-2^ni$. The corresponding eigenvalue is
\begin{equation}
\theta_x^-:=\frac{\chi_x^-(C)-\overline\chi_x^-(C)}{2^m}=-2^ni.
\end{equation}
Similarly the right side of \eqref{char_eq} for $\chi_x^+$ evaluates to $2^ni$
and we have $\theta_x^+=2^ni$.
In both cases \eqref{char_eq} holds when $\tau=\frac{\pi}{2^{n+1}}$.

\qed

\begin{concremark*} The sets $C$ we took as connection sets for our
  oriented normal Cayley graphs are somewhat remarkable as they have previously 
  appeared in a different guise in the paper \cite{Gow-Quinlan} of Gow and Quinlan, as examples of  {\it central difference sets}. Here, ``central'' means that the difference set is a
  union of conjugacy classes, so corresponds to `` normal'' for Cayley graphs.   It appears that central difference sets in nonabelian groups are extremely   rare; no other examples are  known to the author at this time.
\end{concremark*}

\section{Acknowledgements}
I would like to thank  Ada Chan, Chris Godsil and Raghu Pantangi for several helpful conversations related to the topic of this note.

\end{document}